\documentclass{article}
\usepackage{charlesmacros}
\usepackage{spconf}
\usepackage{orcidlink}
\usepackage{algorithmic}
\ninept

\begin{document}

\setlength\abovedisplayskip{5pt}
\setlength\belowdisplayskip{5pt}
\setlength\textfloatsep{5pt}
\title{
        A Distributed Adaptive Algorithm for Non-Smooth Spatial Filtering Problems
}

\name{
        Charles Hovine \orcidlink{0000-0001-6657-1066} and Alexander Bertrand \orcidlink{0000-0002-4827-8568}
        \thanks{This project has received funding from the European Research Council (ERC) under the European Union's Horizon 2020 research and innovation programme (grant agreement No. 802895), the FWO (Research Foundation Flanders) for project
                G081722N and from the Flemish Government under the "Onderzoeksprogramma Artifici\"ele Intelligentie (AI) Vlaanderen" programme.
        }
}

\address{\textit{KU Leuven, Department of Electrical Engineering (ESAT)} \\
        \textit{STADIUS Center for Dynamical Systems, Signal Processing and Data Analytics}\\
        \textit{KU Leuven Institute for Artificial Intelligence (Leuven.AI)}
        \\
       Leuven, Belgium \\
\{charles.hovine, alexander.bertrand\}@esat.kuleuven.be}

\maketitle
\begin{abstract}
        Computing the optimal solution to a spatial filtering problems in a Wireless Sensor Network can incur large bandwidth and computational requirements if an approach relying on data centralization is used. The so-called distributed adaptive signal fusion (DASF) algorithm solves this problem by having the nodes collaboratively solve low-dimensional versions of the original optimization problem, relying solely on the exchange of compressed views of the sensor data between the nodes. However, the DASF algorithm has only been shown to converge for filtering problems that can be expressed as smooth optimization problems. In this paper, we explore an extension of the DASF algorithm to a family of non-smooth spatial filtering problems, allowing the addition of non-smooth regularizers to the optimization problem, which could for example be used to perform node selection, and eliminate nodes not contributing to the filter objective, therefore further reducing communication costs.  We provide a convergence proof of the non-smooth DASF algorithm and validate its convergence via simulations in both a static and adaptive setting.  
\end{abstract}
\begin{keywords}
        Adaptive spatial filtering, Wireless Sensor Networks, Non-smooth optimization, Distributed signal processing.
\end{keywords}

\section{Introduction}
A spatial filtering problem usually consists in finding the linear combination of a set of signals that is optimal with regards to some criterion, and can therefore be expressed as the solution of an optimization problem. Common examples include principal components analysis \cite{Hotelling1933}, canonical correlation analysis \cite{Kettenring1971}, MAX-SNR beamforming, multichannel Wiener filtering \cite{doclo2007frequency} and  common spatial patterns \cite{koles1990spatial}. 

In the case of Wireless Sensor Networks (WSNs), where several sensing nodes communicate via wireless links, signals are often only short-term stationary, with statistics drifting over time. Being able to \emph{adaptively} compute filters therefore becomes an important requirement. The classical approach to computing spatial filters in WSNs consists in designating a particular node as the \emph{fusion center} (FC), which will collect all the raw data and perform the filter computation centrally \cite{Haykin2010}. This approach is however not ideal, as the bandwidth and computational power required at the FC scales poorly with the number of both nodes and signals. Additionally, the FC constitutes a single point of failure, which can be problematic for many deployment scenarios. An alternative approach consists in solving the filtering problem in a distributed fashion, by sharing the work across the sensor nodes. 

The DASF algorithm \cite{musluoglu2022unified_p1} is a framework for solving adaptive spatial filtering problems in a distributed fashion. Instead of sharing their raw observations, the nodes share efficiently crafted compressed views of their sensor data, which are then used to locally solve low-dimensional versions of the original optimization problem at each node. In addition, the sensor signals' statistics are allowed to change during the course of the algorithm, such that the optimal solution can be tracked adaptively.

The convergence and optimality of DASF in the case of filtering problem expressible as smooth optimization problems, has been studied in \cite{musluoglu2022unified_p2}, but the applicability of the algorithm to non-smooth problems is still unknown. In this paper, we show the convergence and optimality of the algorithm for a family of non-smooth, and possibly non-convex optimization problems. In addition to allowing the algorithm to be applied to well-known non-smooth problems such as sparse signal recovery and compressed sensing \cite{tropp2007signal, donoho2006compressed}, it allows the use of non-smooth sparsity-promoting regularizers. In the context of WSNs, such regularizers can for example be used to perform channel selection, and hence reduce both bandwidth and computational stress on the sensor nodes.  
 

\section{Problem Statement}
\label{sec:problem_statement}

We consider a network consisting of $K$ sensor nodes, where each node $k$ collects discrete observations of an $M_k$-channel signal $\bm{y}_k(t)$. We denote $\bm{y}(t)=[\bm{y}^T_1(t),\dots,\bm{y}^T_K(t) ]^T$ the network-wide multi-channel sensor signal, where each observation is an element of $\R^M$ with $M=\sum_k M_k$.
Our goal is to design a network-wide spatial filter $\bm{X}\in\R^{M\times Q}$ which fuses all the channels of $\bm{y}(t)$ into $Q$ output channels that satisfy a certain optimality criterion, which in generic form can be written as
\begin{equation}
        \label{eq:global_problem}
        \begin{split}
                \bm{X}^\star(t)\in &\argmin_{\bm{X}} f(\bm{X}^T\bm{y}(t) ,\bm{X}^T\bm{B}) + g(\bm{X}^T\bm{\Gamma})\\
                                   &\st [\bm{X}_k^T\bm{y}_k(t), \bm{X}^T_k\bm{B}_k]\in\mc{X}_k\quad \forall k.
        \end{split}
\end{equation}
where each block $\bm{X}_k\in\R^{M_k\times Q}$ is defined according to the partitioning $\bm{X}=[\bm{X}_1^T,\cdots, \bm{X}_K^T]^T$. $\bm{\Gamma}=\text{BlkDiag}(\bm{\Gamma}_1,\dots, \bm{\Gamma}_K)$, where $\text{BlkDiag}(\cdot)$ is the operator producing a block diagonal matrix whose blocks correspond to the operator's ordered arguments, with  $\bm{\Gamma}_k\in\R^{M_k\times L_k}$ a constant (time-independent) data matrix, along with $\bm{B}=[\bm{B}_1^T, \dots, \bm{B}_K]^T\in\R^{M\times D}$. The term involving $f$ is a smooth function of $\bm{X}$ (i.e. differentiable with continuous gradient) and the term involving $g$ is a convex, possibly non-smooth, function of $\bm{X}$. The other particular characteristics, of $f$, $g$ and $\mc{X}_k$ are not immediately relevant and will thus be later described in Section \ref{sec:convergence}, along with our convergence analysis. Additionally, we require that there exist some functions $g_k$, such that the non-smooth term $g$ can be separated as
\begin{align}
        g(\bm{X}^T\bm{\Gamma}) &= \sum_k g_k(\bm{X}_k^T\bm{\Gamma}_k).
\end{align}
which holds for, e.g., the $l_1$-norm.
Typically, $f$ is a cost function depending on the second order statistics of $\bm{y}(t)$ (e.g. the covariance matrix), $g$ is a regularizing term promoting certain desirable properties of the solution (e.g. a sparsity inducing norm), and the constraint sets $\mc{X}_k$ encode some hard limits on the filter, such as limiting the maximum output power, or requiring the per-node filters to have uncorrelated outputs.  Note that, as it is defined, $g$ cannot be an indicator function, and that all the constraints must therefore be encoded in $\mc{X}_k$\footnote{This does not restrict the set of allowable problems, but allows us to simplify the notation used in Section \ref{sec:convergence}.}. As an example, a possibility for $f$ is  
\begin{equation}
        \label{eq:example}
        f(\bm{X}^T\bm{y}(t))=\E{\norm{\bm{X}^T\bm{y}(t)-\bm{d}(t)}_2^2},
\end{equation}
\looseness=-1
where $\E{\cdot}$ denotes the expectation operator, and both $\bm{y}(t)$ and $\bm{d}(t)$ are random signals. Finally, we emphasize that the per-node (i.e. per-block) constraints in \eqref{eq:global_problem} are stricter than in the original DASF problem setting in \cite{musluoglu2022unified_p1}, which allowed for coupling constraints between the different blocks of variables, i.e. between the variables of different nodes. Note that although \eqref{eq:global_problem} is only allowed to depend on $\bm{y}(t)$, it does not preclude the existence of multiple sets of signals. Indeed, by imposing the proper structure on both $f$ and $\bm{y}(t)$, we can describe problems depending on multiple sets of data. We may for example wish to solve problems of the form 
\begin{equation}
        \min_{\bm{X}_1, \bm{X}_2} f(\bm{X}_1^T\bm{u}(t),\bm{X}_2^T\bm{v}(t))
\end{equation}
which is possible in the framework of \eqref{eq:global_problem} by defining $\bm{X} = [\bm{X}_1^T, \bm{X}_2^T]^T$ and $\bm{y}(t)=\text{BlkDiag}(\bm{u}(t), \bm{v}(t))$.

For the rest of this paper, we will omit the time index $t$ of $\bm{X}^\star$, as we assume for mathematical tractability that $\bm{y}(t)$ is short-term stationary, and hence that the set of optimal filters varies slowly with time (i.e. $\bm{X}^\star(t) \approx \bm{X}^\star(t+\tau)$ for small enough $\tau$). Furthermore, we do not have access to the data-generating process $\bm{y}(t)$, but only to consecutive realizations of $\bm{y}(t)$, which, under the assumption of ergodicity, can be used to obtain an estimate of the statistics implicitly involved in \eqref{eq:global_problem}. In an actual implementation, one would evaluate/optimize \eqref{eq:global_problem} based on estimated statistics of $\bm{y}(t)$, i.e. $\bm{y}(t)$ would need to be replaced by a matrix of discrete samples $\bm{Y}(t)$ centered around $t$ and the expectation in \eqref{eq:example} would have to be approximated with a sample average.



Our objective is to solve \eqref{eq:global_problem} in a bandwidth-efficient manner. The optimization procedure therefore cannot rely on a fusion center to collect samples of the full $\bm{y}(t)$ vector to estimate inter-channel statistics, as this would incur significant communication costs. Indeed, in an adaptive setting where the data is allowed to change at every iteration, every new sample would need to be collected by the FC. Instead, we propose a fully distributed procedure that relies on the nodes sharing linearly compressed views of their observations with one another, and locally solving lower dimensional versions of \eqref{eq:global_problem} at different times instance, depending only on the compressed observations received from other nodes. By exploiting the short-term stationarity of $\bm{y}(t)$, each iteration of the algorithm can be performed over a different time-window, thereby behaving like an adaptive filter in which the filter coefficients are adjusted every time a new (block of) sample(s) is collected.

\section{Non-Smooth DASF}
\label{sec:algorithm}

In order to ease the exposition of the algorithm, we limit our description to the specific case of fully-connected networks. A generalization to arbitrary topologies can be done in a similar fashion as for the original DASF algorithm \cite{musluoglu2022unified_p1, musluoglu2022unified_p2}. Furthermore, without loss of generality, we ignore the deterministic argument $\bm{X}^T \bm{B}$ as it adds a lot of clutter in the equations, while it is largely treated in the same way as the $\bm{X}^T\bm{y}(t)$ argument (we again refer to \cite{musluoglu2022unified_p1, musluoglu2022unified_p2} for further details).

In our algorithm, each node $k$ is responsible for updating its own block $\bm{X}_k\in\R^{M_k\times Q}$ of $\bm{X}=[\bm{X}_1^T,\dots, \bm{X}_K^T]^T$, corresponding to its own locally observed data $\bm{y}_k(t)$. Let us denote $\bm{X}^i$ the algorithm's estimate of the solution of \eqref{eq:global_problem} at iteration $i$. We emphasize that each iteration is performed on a different block of $N$ samples of $\bm{y}(t)$, i.e., the update from $\bm{X}^i$ to $\bm{X}^{i+1}$ will be based on the observations of $\bm{y}(t)$ at sample times $t=(i-1)N,..., iN-1$.

Let us consider problem \eqref{eq:global_problem} with the additional linear constraints 
\begin{equation}
        \label{eq:subspace_constrains}
        \bm{X}_k \in \colspace{\bm{X}^i_k}\quad \forall k\neq q,
\end{equation}
with $\colspace{\cdot}$ denoting the column space of its argument and where $q$ is some arbitrary node, which we will refer to as the updating node. By introducing the parametrization $\bm{X}_k=\bm{X}^i_k\bm{G}_k$ for $k\neq q$ corresponding to the linear subspace constraints \eqref{eq:subspace_constrains}, and defining the compressed signals of node $k$ as $\bm{z}_k^i(t)\triangleq \bm{X}_k^{iT}\bm{y}_k(t)$ and $\bm{F}_k^i\triangleq \bm{X}_k^{iT}\bm{\Gamma}_k$, the new problem \eqref{eq:global_problem} equipped with \eqref{eq:subspace_constrains} can be reformulated as
\begin{subequations}
        \label{eq:local_problem}
        \begin{align}
                \bar{\bm{X}}^\star\in &\argmin_{\bar{\bm{X}}} f(\bar{\bm{X}}^T\bm{z}^i(t)) + g(\bar{\bm{X}}^T\bm{F}^i)\\
                \st & {\bm{G}}_k^T\bm{z}_k^i(t)\in\mc{X}_k\quad \forall k \neq q \\
                    & \bm{X}^T_q\bm{y}_q(t) \in \mc{X}_q\\
                    & \bar{\bm{X}} = [\bm{G}_1^T,\dots, \bm{X}_q^T,\dots,\bm{G}_K^T]^T\label{eq:partitioning}\\
                    & \bm{z}^i(t) = [\bm{z}_1^T(t),\dots, \bm{y}_q^T(t),\dots,\bm{z}_K^T(t)]^T\\
                    & \bm{F}^i = \text{BlkDiag}(\bm{X}^{iT}_1\bm{\Gamma}_{1}, \dots, \bm{\Gamma}_{q}, \dots, \bm{X}^{iT}_K\bm{\Gamma}_{K}).
        \end{align}
\end{subequations}
We can see that by collecting the compressed observations of every other node, some node $q$ can compute a solution of the \emph{local} problem \eqref{eq:local_problem}, and equivalently of the linearly constrained \emph{global} problem \eqref{eq:global_problem} with the addition of the constraints \eqref{eq:subspace_constrains}. We use the term \emph{compressed} observations since, if $Q<M_k$, $\bm{z}_k^i(t)$ will have a lower dimension than $\bm{y}_k(t)$ and can therefore be more efficiently transmitted than the raw data. As $\bm{\Gamma}$ is assumed static, it only needs to be shared once, and only the $\bm{X}_k^i$ will need to be exchanged, unless $L<M_k$, in which case it is more efficient to share $\bm{F}^i_k$.

Note that the global and local problems \eqref{eq:global_problem} and \eqref{eq:local_problem}  have the same general structure but with a different dimension, therefore if a solver exists for the global problem, it can also be used to solve the local problems. In other words, if we denote by $\mathbb{P}(\bm{y}(t), \bm{\Gamma})$ a particular instance of problem \eqref{eq:global_problem}, solving \eqref{eq:local_problem} is equivalent to solving $\mathbb{P}(\bm{z}^i(t),\bm{F}^i)$.

Our iterative procedure consists in updating $\bm{X}^i$ by iteratively solving \eqref{eq:local_problem}, each time selecting a new node $q$ to act as the ``updating node'' in a round-robin fashion. Formally, the procedure is as follows:

\begin{enumerate}[nosep, left=5pt]
        \item{\textit{Data collection:}
                        Every node collects discrete $N$ new observations of $\bm{y}_k(t)$.
                }
        \item{\textit{Aggregation:}
                        Every node except the updating node $q$, computes its compressed data $\bm{z}_k^i(t)$ and $\bm{F}_k^i$ and transmits the corresponding $N$ compressed samples to the updating node $q$.
                }
        \item{\textit{Local solution:}
                        Based on the received compressed samples of $\bm{z}_k^i(t)$, and its own data $\bm{y}_q(t)$, the updating node $q$ can estimate the signal statistics involved in \eqref{eq:local_problem} and solve it using any solver for $\mathbb{P}(\cdot,\cdot)$. It then updates its local block as $\bm{X}^{i+1}_q = \bar{\bm{X}}^\star_q$, and extracts the optimal update matrices $\bm{G}^\star_k$ from $\bar{\bm{X}}^\star$ using the partitioning \eqref{eq:partitioning}\footnote{In the case where the local problem would have multiple solutions, the solution with the smallest distance to $\bar{\bm{X}}^{i-1}\triangleq [\bm{I},\dots,\bm{X}^{{i-1}T}_q,\dots,\bm{I}]^T$ is selected \cite{musluoglu2022unified_p1}}.
                }
        \item{\textit{Solution update:}
                        The updating node transmits the update matrices $\bm{G}^\star_k$ to their corresponding nodes. Each node except the updating node updates its block of the estimate of the solution as $\bm{X}^{i+1}_k = \bm{X}^i_k\bm{G}^\star_k$.
                }
\end{enumerate}
The full description of the algorithm is given by Algorithm \ref{alg:dasf}, which we refer to as non-smooth DASF (NS-DASF).

\SetKwFor{Node}{At node}{}{}
\SetKwFor{Loop}{loop}{}{}
\begin{algorithm}[t]
        \Begin{
                $i\gets 0$, $q\gets 1$, Randomly initialize $\bm{X}^0$\\
                \Loop{}{
                        \For{$k\in\{1,\dots, K\}\smallsetminus\{q\}$}{
                                \Node{$k$} {
                                        Collect a new batch of $N$ samples of $\bm{y}_k(t)$ and send the compressed samples $\bm{z}_k^i(t)=\bm{X}_k^{iT}\bm{y}_k(t)$ along with $\bm{F}_k^i=\bm{X}_k^{iT}\bm{\Gamma}_k$ to node $q$.
                                }
                        }
                        \Node{$q$}{
                                Obtain $\bar{\bm{X}}^\star$ by solving the local problem \eqref{eq:local_problem} using only the compressed data $\bm{z}^i(t)$ and $\bm{\Gamma}^i$. If the solution is not unique, select the one minimizing $\norm{\bar{\bm{X}}^\star-\bar{\bm{X}}^{i-1}}_F$.\\
                                Extract $\bm{X}_q^\star$ and the $\bm{G}_k^\star$'s  from $\bar{\bm{X}}^\star$ according to the partitioning \eqref{eq:partitioning}.\\
                                $\bm{X}_q^{i+1}\gets \bm{X}_q^\star$\\
                                \For{$k\in\mathcal{K}\smallsetminus\{q\}$}{

                                        Send $\bm{G}_k^\star$ to node $k$.\\
                                        \Node{$k$}{
                                                $\bm{X}^{i+1}_k \gets \bm{X}^{i}_k\bm{G}_k^\star$\\
                                        }
                                }

                        }
                        $i\gets i+1$, $q\gets (q \mod K) + 1$

                }
        }
        \caption{NS-DASF algorithm.}
        \label{alg:dasf}
\end{algorithm}
\vspace{-15pt}

\section{Convergence}
One can gain intuition about the algorithm's convergence  by noting that $\bm{X}^{i}$ is always in the feasible set of problem \eqref{eq:local_problem}, as it satisfies \eqref{eq:subspace_constrains} trivially, ensuring a monotonic decrease of the objective.
\label{sec:convergence}
We will start by showing that fixed points of Algorithm \ref{alg:dasf} (i.e. points $\bm{X}^*$ such that if $\bm{X}^0=\bm{X}^*$, then $(\bm{X}^i)_{i\in\mathbb{N}}=(\bm{X}^*)_{i\in\mathbb{N}}$) are stationary points of problem \eqref{eq:global_problem}, and then reuse one of the result of \cite{musluoglu2022unified_p2} to show convergence to such a point. Let us first define
\begin{equation}
        \begin{split}
                p(\bm{X})&\triangleq f(\bm{X}^T\bm{y}(t))\\
                q(\bm{X})&\triangleq g(\bm{X}^T\bm{\Gamma})\\
                q_k(\bm{X}_k)&\triangleq g_k(\bm{X}_k^T\bm{\Gamma}_k)
        \end{split}
        \quad 
        \begin{split}
                \mc{D}_k&\triangleq \{\bm{X}_k\;|\; \bm{X}_k^T\bm{y}_k(t)\in\mc{X}_k\}\\
                \mc{D}&\triangleq\mc{D}_1\times \cdots\times \mc{D}_K
        \end{split}
\end{equation}
where $\times$ denotes the cartesian product between sets. We assume that $p:\R^{M\times Q}\to \R$ is a smooth function with compact sublevel sets, $q:\R^{M\times Q}\to\R$ is a proper, lower semicontinuous and convex function, and $\mc{D}$ is a closed set in $\R^{M\times Q}$. We wish to show that the fixed points of the algorithm are also stationary points of problem \eqref{eq:global_problem}, that is feasible points $\bm{X}^\star$ such that \cite{rockafellar2009variational}
\begin{equation}
        \label{eq:stationarity_conditions}
        0\in\nabla p(\bm{X}^\star)+ \partial q(\bm{X}^\star) + N_\mc{D}(\bm{X}^\star),
\end{equation}
where $\partial q(\cdot)$ denotes the set of subgradients of $q$ at a particular point and $N_{(\cdot)}(\cdot)$ denotes the normal cone at a particular point of a set. The sum  between sets must be interpreted as a Minkowski sum\footnote{$A+B=\{a+b\;|\; a\in A,\; b\in B\}$.}. Equation \eqref{eq:stationarity_conditions} generalizes the well-known Karush-Kuhn-Tucker (KKT) conditions \cite{karush1939minima ,kuhnnonlinear} to the case of non-smooth functions\footnote{\cite{li2020understanding} contains a useful introduction to the concepts of stationnarity for non-smooth problems.} (it therefore reduces to the KKT conditions in the smooth case).
It merely gives necessary conditions for a feasible point to be a solution of \eqref{eq:global_problem}, but the condition is also sufficient in the case of convex instances of the problem \cite{rockafellar2009variational}. Intuitively, those points are such that all directional derivatives pointing inside the feasible set are positive (i.e. there is no feasible descent direction at that point, see \cite{rockafellar2009variational, Royset2021} for details). 

Before stating our main result, we give an explicit expression of the normal cone corresponding to the subspace constraints \eqref{eq:subspace_constrains} at a point $\bm{X}^i=\bm{X}$. We denote 
\begin{equation}
        \mc{L}_q(\bm{X})\triangleq \colspace{\bm{X}_1} \times \cdots \times \R^{ M_q\times Q} \times \cdots \times \colspace{\bm{X}_K}
\end{equation}
the subspace constraints at node $q$, where $\bm{X}$ here corresponds to $\bm{X}^i$ in \eqref{eq:subspace_constrains} and where  $\R^{ M_q\times Q}$ corresponds to the lack of constraints associated with node $q$. As the normal cone to a linear subspace is simply its orthogonal complement \cite{Royset2021}, we have
\begin{equation}
        \label{eq:def_local_normal_cone}
        N_k(\bm{X})\triangleq N_{\mc{L}_q(\bm{X})}(\bm{X}) = \colspace{\bm{X}_1}^\perp \times \cdots \times\{0\} \times \cdots \times\colspace{\bm{X}_K}^\perp,
\end{equation}
where $(\cdot)^\perp$ denotes the orthogonal complement and the singleton $\{0\}=(\R^{M_q\times Q})^\perp$. We can now state a first result, which established the optimality of fixed points of Algorithm \ref{alg:dasf} under a mild technical condition which is akin to the well-known linear independence constraint qualification (LICQ). 
\begin{theorem}
        \label{theo:fixed_are_stat}
        Let $\bm{X}^*$ be a fixed point of Algorithm \ref{alg:dasf} and assume that the following constraint qualifications hold:
        \begin{equation}
                \label{eq:qualification}
                N_\mc{D}(\bm{X}^*)\cap N_k(\bm{X}^*) = \{0\}\quad \forall k.
        \end{equation}
        Then $\bm{X}^*$ satisfies the stationary conditions \eqref{eq:stationarity_conditions} and is therefore a stationary point of problem \eqref{eq:global_problem}.
\end{theorem}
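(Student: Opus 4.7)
The plan is to exploit the fact that at a fixed point $\bm{X}^*$, for every choice of updating node $q \in \{1,\ldots,K\}$ (all of which occur under the round-robin schedule), the local problem solved at iteration $i$ with $\bm{X}^i = \bm{X}^*$ must return $\bm{X}^{i+1} = \bm{X}^*$. Under the reparametrization $\bm{X}_k = \bm{X}_k^* \bm{G}_k$ for $k \neq q$ that defines \eqref{eq:local_problem}, the objective matches that of \eqref{eq:global_problem} and the feasible set corresponds exactly to $\bm{X} \in \mc{D} \cap \mc{L}_q(\bm{X}^*)$. So the first step is to conclude that $\bm{X}^*$ is a minimizer of \eqref{eq:global_problem} restricted to $\mc{D} \cap \mc{L}_q(\bm{X}^*)$ for every $q$.

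Next, I would write the first-order necessary optimality condition for this restricted problem at $\bm{X}^*$, namely $0 \in \nabla p(\bm{X}^*) + \partial q(\bm{X}^*) + N_{\mc{D} \cap \mc{L}_q(\bm{X}^*)}(\bm{X}^*)$, valid because $p$ is smooth and $q$ is proper l.s.c.\ convex. To split the intersection normal cone, I would invoke the sum rule $N_{\mc{D} \cap \mc{L}_q(\bm{X}^*)}(\bm{X}^*) = N_\mc{D}(\bm{X}^*) + N_{\mc{L}_q(\bm{X}^*)}(\bm{X}^*)$, which holds precisely under the constraint qualification \eqref{eq:qualification} (identifying $N_k$ with $N_{\mc{L}_k(\bm{X}^*)}(\bm{X}^*)$); this is the standard non-smooth analogue of LICQ (e.g.\ Rockafellar--Wets, Thm.~6.42), and the condition as stated is symmetric because $N_{\mc{L}_q(\bm{X}^*)}(\bm{X}^*)$ is a linear subspace. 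Projecting the resulting inclusion onto the $q$-th block and using \eqref{eq:def_local_normal_cone}, whose $q$-th block is $\{0\}$, yields $0 \in [\nabla p(\bm{X}^*)]_q + \partial q_q(\bm{X}_q^*) + N_{\mc{D}_q}(\bm{X}_q^*)$.

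Cycling $q$ through all nodes produces this per-block inclusion for every $k$. Since $g$ is assumed separable across blocks and $\mc{D} = \mc{D}_1 \times \cdots \times \mc{D}_K$, both $\partial q(\bm{X}^*) = \partial q_1(\bm{X}_1^*) \times \cdots \times \partial q_K(\bm{X}_K^*)$ and $N_\mc{D}(\bm{X}^*) = N_{\mc{D}_1}(\bm{X}_1^*) \times \cdots \times N_{\mc{D}_K}(\bm{X}_K^*)$ decouple over blocks, so stacking the per-block conditions recovers exactly \eqref{eq:stationarity_conditions}. I expect the main obstacle to be the normal-cone sum rule: one must carefully argue that the feasibility-level reparametrization $\bm{G}_k \mapsto \bm{X}_k^* \bm{G}_k$ really does induce the set $\mc{D} \cap \mc{L}_q(\bm{X}^*)$ as the effective feasible set of the local problem lifted back to $\R^{M \times Q}$, and that \eqref{eq:qualification} supplies the exact CQ needed to split this intersection's normal cone. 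Everything else is routine block-wise accounting using separability.
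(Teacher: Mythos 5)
Your proposal is correct and follows essentially the same route as the paper: fixed point $\Rightarrow$ stationarity of each node's local problem on $\mc{D}\cap\mc{L}_q(\bm{X}^*)$, use of \eqref{eq:qualification} to split the normal cone into $N_\mc{D}(\bm{X}^*)+N_{\mc{L}_q(\bm{X}^*)}(\bm{X}^*)$, projection onto the $q$-th block where $N_{\mc{L}_q(\bm{X}^*)}$ vanishes, and stacking over all $q$ via the separability of $q$ and $\mc{D}$. You are in fact somewhat more explicit than the paper about the two points it compresses (that the reparametrization yields $\mc{D}\cap\mc{L}_q(\bm{X}^*)$ as the lifted feasible set, and that the CQ is exactly what licenses the intersection sum rule, noting the symmetry since $N_{\mc{L}_q(\bm{X}^*)}(\bm{X}^*)$ is a subspace), which the paper relegates to a footnote citing the page limit.
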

\begin{proof}
        The qualification \eqref{eq:qualification} can be viewed as a generalization of the traditional LICQ \cite{Royset2021}, and ensures that the solutions of the local problems \eqref{eq:local_problem} satisfy\footnote{This is true in part because all the properties of $f$, $g$, and $\mc{X}_k$ described at the beginning of Section \ref{sec:problem_statement} are inherited by $p,q$ and $\mc{D}_k$. We omit this part of the proof due to the page limit.} the stationarity conditions of \eqref{eq:local_problem} (or equivalently \eqref{eq:global_problem} with the additional constraints \eqref{eq:subspace_constrains}) \cite{rockafellar2009variational}
        \begin{equation}
                0\in\nabla p(\bm{X}^*)+ \partial q(\bm{X}^*) + N_\mc{D}(\bm{X}^*)+N_k(\bm{X}^*)\quad \forall k,
        \end{equation}
        or equivalently
        \begin{equation}
                \begin{split}
                        \forall k, \exists z_k \in \partial q(\bm{X}^*)+N_\mc{D}(\bm{X}^*), \exists a_k \in  N_k(\bm{X}^*): \\ \nabla p(\bm{X}^*)+z_k+a_k = 0.
                \end{split}
        \end{equation}
        Let $a_k^k$ and $z_k^k$ denote the blocks corresponding to node $k$ within $a_k$ and $z_k$, respectively. Similarly, Let $\nabla_k p(\bm{X}^*)$ correspond to the block of the gradient associated with the block $\bm{X}_k^*$. Then 
        $
                0 = z^k_k + \nabla_k p(\bm{X}^*),\;
        $
        as $a_k^k \in \{0\}$ from the definition \eqref{eq:def_local_normal_cone}. From the block separability of $\mc{D}$ and $q$, we have that \cite{rockafellar2009variational}
        \begin{subequations}
                \begin{align}
                        N_\mc{D}&=N_{\mc{D}_1} \times \cdots \times N_{\mc{D}_K}\\
                        \partial q(\bm{X})&=\partial q_1(\bm{X}) \times \cdots \times\partial q_K(\bm{X}).
                \end{align}
        \end{subequations}
        Therefore it must be that
        \begin{equation}
                z_k^k=-\nabla_k p(\bm{X}^*)\in\partial q_k(\bm{X}^*)+N_{\mc{D}_k}(\bm{X}^*)\quad \forall k,
        \end{equation}
        and therefore
        $
                -\nabla p(\bm{X}^*)\in\partial q(\bm{X}^*)+N_{\mc{D}}(\bm{X}^*),\;
        $
         i.e. \eqref{eq:stationarity_conditions} is satisfied.
\end{proof}
In the case where the constraint set consists of smooth equality and inequality constraints, we have the following corollary.
\begin{corrolary} (Proof omitted)
        Let $\bm{X}^*$ be a fixed point of Algorithm \ref{alg:dasf} and let $u^k_j:\R^{M\times Q}\to\R, v^k_l:\R^{M\times Q}\to\R$ be smooth functions $\forall j,l,k$. If the constraint sets $\mc{D}_k$ can be expressed as 
        \begin{equation}
                \mc{D}_k=\{\bm{X}_k\;|\; u^k_j(\bm{X})=0, v^k_l(\bm{X})\leq 0\; \forall j,l\}
        \end{equation} 
        and it holds that the element of the set 
        \begin{equation}
                \label{eq:block_LICQ}
        \{{\bm{X}^*_k}^T\nabla u^k_j(\bm{X}_k^*)\;\forall j; \;{\bm{X}^*_k}^T\nabla v^k_l(\bm{X}_k^*)\;\forall l\in \mathbb{A}(\bm{X}^*)\},
        \end{equation}
        where $\mathbb{A}(\bm{X}^*)$ denotes the active inequality constraints at $\bm{X}^*$, are linearly independent for every $k$, then the qualification \eqref{eq:qualification} is satisfied and $\bm{X}^*$ is a stationary point of problem \eqref{eq:global_problem}.
\end{corrolary}
\noindent The qualification \eqref{eq:block_LICQ} can be seen as a stricter
version of the well-known LICQ, where each of the blocks of the gradients are
required to be independent when projected on the column-spaces of the blocks of ${\bm{X}^*}$, instead of the gradients themselves. 

We will now rephrase \cite[Theorem 6]{musluoglu2022unified_p2}, which asserts convergence of Algorithm \ref{alg:dasf} (the proof is the same as in \cite{musluoglu2022unified_p2} since it does not depend on the (non-)smoothness of the objective, except for the part associated with Theorem \ref{theo:fixed_are_stat}, which was proven above).
\begin{theorem}
        Let $(\bm{X}^i)_{i\in\mathbb{N}}$ denote a sequence of iterates generated by Algorithm \ref{alg:dasf} and assume that the solution set of \eqref{eq:global_problem} is non-empty and varies continuously\footnote{Continuity must here be understood in the context of point-to-set maps. More specifically, we require \emph{upper hemicontinuity}. For details see  \cite{Berge1997, Charalambos2013}.} with the problem's parameters $\bm{y}(t)$ and $\bm{\Gamma}$. Furthermore, assume that the number of stationary points of \eqref{eq:global_problem} is finite (or the number of reachable stationary points of the solver of \eqref{eq:local_problem} is finite). Then $(\bm{X}^i)_{i\in\mathbb{N}}$ converges to a stationary point of problem \eqref{eq:global_problem}.
\end{theorem}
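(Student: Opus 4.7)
The plan is to mirror the proof of \cite[Theorem 6]{musluoglu2022unified_p2}, whose structure carries over to the non-smooth setting without modification except at the one step where the fact that fixed points are stationary points is invoked; that step is supplied here by Theorem \ref{theo:fixed_are_stat}. I would organize the argument into three stages: (i) monotone decrease and boundedness of the iterates, (ii) identification of accumulation points as fixed points of Algorithm \ref{alg:dasf}, and (iii) promotion of the resulting subsequential convergence to convergence of the full sequence.

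Stage (i) is immediate. Writing $F(\bm{X})\triangleq p(\bm{X})+q(\bm{X})+\iota_\mc{D}(\bm{X})$, the current iterate $\bm{X}^i$ is always feasible for the local problem \eqref{eq:local_problem} (take $\bm{G}_k=\bm{I}$ for $k\neq q$ and $\bm{X}_q=\bm{X}^i_q$), so the local minimization forces $F(\bm{X}^{i+1})\leq F(\bm{X}^i)$. Compactness of the sublevel sets of $p$, lower semicontinuity of $q$, and closedness of $\mc{D}$ then confine $(\bm{X}^i)_{i\in\mathbb{N}}$ to a compact set and guarantee the existence of accumulation points. Stage (iii) is then the standard upgrade: monotone convergence of $F(\bm{X}^i)$ together with the continuity arguments of stage (ii) yields vanishing successive differences, and the set of accumulation points of a bounded sequence with this property is connected; finiteness of the (reachable) stationary points collapses this connected set to a singleton.

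The crux is stage (ii). I would extract a subsequence $\bm{X}^{i_j}\to \bm{X}^*$ along which the updating node is held at a fixed $q_0$, which is possible because the round-robin schedule has period $K$. Since the local problem has the same structure as the global one but with parameters $(\bm{z}^{i_j}(t),\bm{F}^{i_j})$ that depend continuously on $\bm{X}^{i_j}$, the assumed upper hemicontinuity of the global solution correspondence transfers to the local solver and, combined with monotone convergence of $F$, forces $\bm{X}^*$ to be left invariant by the update at $q_0$. The closest-distance tie-breaking rule of Algorithm \ref{alg:dasf} disambiguates this image when the local minimizer is non-unique. Iterating over one full round-robin cycle yields invariance of $\bm{X}^*$ under every choice of updating node, so $\bm{X}^*$ is a fixed point of the algorithm, and Theorem \ref{theo:fixed_are_stat} then supplies the required stationarity for \eqref{eq:global_problem}.

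The main obstacle is precisely this fixed-point step, both because the round-robin cycling forces upper hemicontinuity to be propagated through $K$ successive updates rather than one, and because of the bookkeeping surrounding non-uniqueness of local minimizers. Non-smoothness of $g$ does not by itself complicate the argument: being proper, lsc, convex and real-valued, $g$ is continuous on $\R^{M\times Q}$, so the local objectives converge uniformly on compact sets exactly as in the smooth analysis of \cite{musluoglu2022unified_p2}, and the only place non-smoothness leaves a trace is in the invocation of Theorem \ref{theo:fixed_are_stat} in place of its KKT-based analogue.
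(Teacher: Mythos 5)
Your proposal is correct and follows essentially the same route as the paper, which simply invokes the proof of \cite[Theorem 6]{musluoglu2022unified_p2} verbatim and notes that the only smoothness-dependent step is the identification of fixed points as stationary points, supplied by Theorem \ref{theo:fixed_are_stat}. Your three-stage reconstruction (monotone decrease and boundedness, accumulation points as fixed points via upper hemicontinuity and the round-robin schedule, and collapse to a single limit via finiteness of stationary points) is a faithful expansion of that cited argument, and your observation that a finite-valued convex $g$ is automatically continuous correctly explains why non-smoothness leaves no other trace.
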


\vspace{-5pt}
\section{Simulated Example}
\vspace{-5pt}
\label{sec:examples}
Consider the sparse multichannel Wiener filtering problem 
\begin{equation}
        \label{eq:sparse_recovery_problem}
        \min_{\bm{X}} \E{\norm{\bm{X}^T\bm{y}(t)-\bm{d}(t)}_2^2}+\lambda  \norm{\bm{X}}_1,
\end{equation}
where $\bm{d}(t)$ is some desired $Q$-channel filter output signal. For the following simulations, we generated instances of the problem as $\bm{d}(t) = \bm{X}^{\star T} \bm{y}(t) + \bm{n}(t)$, where the entries of $\bm{y}(t)$ and $\bm{n}(t)$ are i.i.d. zero-mean random gaussian signals with variance $1$ and $0.1$, respectively. $\bm{X}^*$ is an $(\frac{M}{10})$-sparse random vector with zero-mean and unit variance gaussian entries. Furthermore, we set $\lambda=1$, $Q=1$,  $K=10$, $M_k=10$,  and generate a 1000 samples of $\bm{y}(t)$, $\bm{d}(t)$ and $\bm{n}(t)$ for each experiment. The expectation in \eqref{eq:sparse_recovery_problem} is computed as a simple sample average. The local version of \eqref{eq:sparse_recovery_problem} was solved using Chambolle-Pock's algorithm \cite{chambolle2011first}, and we therefore only approximate the optimal local solution of \eqref{eq:local_problem}.

For the case of a problem which does not vary in time, we performed a Monte Carlo simulation consisting of 100 runs, with the parameters described above. Different $\bm{y}(t)$, $\bm{X}^\star$ and $\bm{n}(t)$ were randomly generated for each run. Figure \ref{fig:static} depicts the convergence in terms of the relative mean-squared-error ${\norm{\bm{X}^i-\bm{X}^\star}_F^2}/{\norm{\bm{X}^\star}_F^2}$.
We see that the algorithm consistently converges to reasonable accuracy within
two full rounds (i.e. each node has solved the local problem twice,
corresponding to 20
iterations in our example). The remaining static error should be attributed to the error inherent to the iterative method used to solve the local problems, and not to our procedure itself (as implied by Theorem \ref{theo:fixed_are_stat}).

\begin{figure}[t]
        \includegraphics[width=\linewidth]{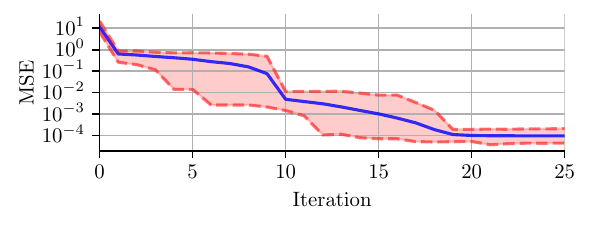}
        \vspace{-22pt}
\caption{Convergence of Algorithm \ref{alg:dasf} applied to problem \eqref{eq:sparse_recovery_problem}. Dashed red curves correspond to the min-max convergence curves. The blue curve corresponds to the median convergence curve.}
\label{fig:static}
\end{figure}

Although we do not provide any proof or quantitative relationship between the rate of change of $\bm{X}^\star(t)$ and the relative error of the algorithm's estimate of the solution, we illustrate the tracking capabilities of the algorithm with a particular example depicted in Figure \ref{fig:dynamic}. Two sparse vectors $\bm{X}_A$ and $\bm{X}_B$ were drawn from the same distribution used for $\bm{X}^\star$ in the static case, and $\bm{X}^\star(t)$ was computed as $w(t)\bm{X}_A+(1-w(t))\bm{X}_B$, where $w(t) = t\cos(t^4)$. The time at iteration $i$ is $t_i=i/180$. Figure \ref{fig:dynamic} depicts the projection of the optimal solution and the algorithm's estimate on the line joining $\bm{X}_A$ to $\bm{X}_B$. We see that as the rate of change of the optimal solution increases, the algorithm starts lagging behind the optimal solution.

\begin{figure}[t]
        \includegraphics[width=\linewidth]{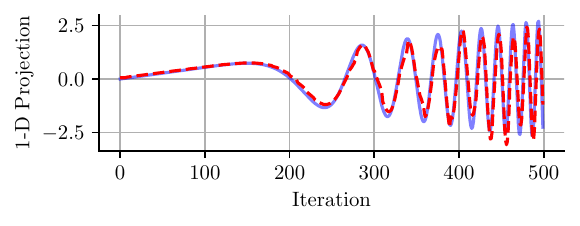}
        \vspace{-22pt}
        \caption{Tracking of an accelerating solution over time. The blue curve corresponds to the optimal solution, the red dashed curve corresponds to NS-DASF's estimate.}
        \label{fig:dynamic}
\end{figure}

\vspace{-5pt}
\section{Conclusion and Future Work}
\vspace{-5pt}
\label{sec:discussion}
In this paper, we have described a distributed adaptive algorithm to solve a particular family of non-smooth spatial filtering problems. The algorithm was validated both by a formal proof and numerical simulations. In future works, we will provide an analysis of the convergence properties of the algorithm and investigate the link between the global solution accuracy, the local accuracy, and the rate of change of the data (i.e. the tracking performance of the algorithm).

\bibliographystyle{IEEEtran}
\bibliography{main_biblio}

\begin{thebibliography}{10}
\providecommand{\url}[1]{#1}
\csname url@samestyle\endcsname
\providecommand{\newblock}{\relax}
\providecommand{\bibinfo}[2]{#2}
\providecommand{\BIBentrySTDinterwordspacing}{\spaceskip=0pt\relax}
\providecommand{\BIBentryALTinterwordstretchfactor}{4}
\providecommand{\BIBentryALTinterwordspacing}{\spaceskip=\fontdimen2\font plus
\BIBentryALTinterwordstretchfactor\fontdimen3\font minus
  \fontdimen4\font\relax}
\providecommand{\BIBforeignlanguage}[2]{{%
\expandafter\ifx\csname l@#1\endcsname\relax
\typeout{** WARNING: IEEEtran.bst: No hyphenation pattern has been}%
\typeout{** loaded for the language `#1'. Using the pattern for}%
\typeout{** the default language instead.}%
\else
\language=\csname l@#1\endcsname
\fi
#2}}
\providecommand{\BIBdecl}{\relax}
\BIBdecl

\bibitem{Hotelling1933}
H.~Hotelling, ``Analysis of a complex of statistical variables into principal
  components.'' \emph{Journal of educational psychology}, vol.~24, no.~6, p.
  417, 1933.

\bibitem{Kettenring1971}
J.~R. Kettenring, ``Canonical analysis of several sets of variables,''
  \emph{Biometrika}, vol.~58, no.~3, pp. 433--451, 1971.

\bibitem{doclo2007frequency}
S.~Doclo, A.~Spriet, J.~Wouters, and M.~Moonen, ``Frequency-domain criterion
  for the speech distortion weighted multichannel wiener filter for robust
  noise reduction,'' \emph{Speech Communication}, vol.~49, no. 7-8, pp.
  636--656, 2007.

\bibitem{koles1990spatial}
Z.~J. Koles, M.~S. Lazar, and S.~Z. Zhou, ``Spatial patterns underlying
  population differences in the background eeg,'' \emph{Brain topography},
  vol.~2, no.~4, pp. 275--284, 1990.

\bibitem{Haykin2010}
S.~Haykin and K.~R. Liu, \emph{Handbook on array processing and sensor
  networks}.\hskip 1em plus 0.5em minus 0.4em\relax John Wiley \& Sons, 2010,
  vol.~63.

\bibitem{musluoglu2022unified_p1}
C.~A. Musluoglu and A.~Bertrand, ``A unified algorithmic framework for
  distributed adaptive signal and feature fusion problems--part i: Algorithm
  derivation,'' \emph{arXiv preprint arXiv:2208.08867}, 2022.

\bibitem{musluoglu2022unified_p2}
C.~A. Musluoglu, C.~Hovine, and A.~Bertrand, ``A unified algorithmic framework
  for distributed adaptive signal and feature fusion problems--part ii:
  Convergence properties,'' \emph{arXiv preprint arXiv:2208.09088}, 2022.

\bibitem{tropp2007signal}
J.~A. Tropp and A.~C. Gilbert, ``Signal recovery from random measurements via
  orthogonal matching pursuit,'' \emph{IEEE Transactions on information
  theory}, vol.~53, no.~12, pp. 4655--4666, 2007.

\bibitem{donoho2006compressed}
D.~L. Donoho, ``Compressed sensing,'' \emph{IEEE Transactions on information
  theory}, vol.~52, no.~4, pp. 1289--1306, 2006.

\bibitem{rockafellar2009variational}
R.~T. Rockafellar and R.~J.-B. Wets, \emph{Variational analysis}.\hskip 1em
  plus 0.5em minus 0.4em\relax Springer Science \& Business Media, 2009, vol.
  317.

\bibitem{karush1939minima}
W.~Karush, ``Minima of functions of several variables with inequalities as side
  constraints,'' \emph{M. Sc. Dissertation. Dept. of Mathematics, Univ. of
  Chicago}, 1939.

\bibitem{kuhnnonlinear}
H.~Kuhn and A.~Tucker, ``Nonlinear programming,'' in \emph{Proc. 2nd Berkeley
  Symposium on Mathematical Statistics and Probabilistics}, 1951, pp. 481--492.

\bibitem{li2020understanding}
J.~Li, A.~M.-C. So, and W.-K. Ma, ``Understanding notions of stationarity in
  nonsmooth optimization: A guided tour of various constructions of
  subdifferential for nonsmooth functions,'' \emph{IEEE Signal Processing
  Magazine}, vol.~37, no.~5, pp. 18--31, 2020.

\bibitem{Royset2021}
J.~O. Royset and R.~J. Wets, \emph{An Optimization Primer}.\hskip 1em plus
  0.5em minus 0.4em\relax Springer, 2021.

\bibitem{Berge1997}
C.~Berge, \emph{Topological Spaces: including a treatment of multi-valued
  functions, vector spaces, and convexity}.\hskip 1em plus 0.5em minus
  0.4em\relax Courier Corporation, 1997.

\bibitem{Charalambos2013}
D.~Charalambos and B.~Aliprantis, \emph{Infinite Dimensional Analysis: A
  Hitchhiker's Guide}.\hskip 1em plus 0.5em minus 0.4em\relax Springer-Verlag
  Berlin and Heidelberg GmbH \& Company KG, 2013.

\bibitem{chambolle2011first}
A.~Chambolle and T.~Pock, ``A first-order primal-dual algorithm for convex
  problems with applications to imaging,'' \emph{Journal of mathematical
  imaging and vision}, vol.~40, no.~1, pp. 120--145, 2011.

\end{thebibliography}

\end{document}